\newtheorem{theorem}{Theorem}
\newcommand{\logit}{\mbox{logit}}
\newcommand{\tausq}{\tau^2}
\newcommand{\sis}{\sigma^2}
\newcommand{\bX}{ {\bf X} }
\newcommand{\bY}{ {\bf Y} }
\newcommand{\bpsi}{ \mbox{\boldmath $ \psi $} }
\newcommand{\pr}{\mbox{pr}}
\newcommand{\beq}{\begin{equation}}
\newcommand{\eeq}{\end{equation}}
\newcommand{\bea}{\begin{eqnarray}}
\newcommand{\eea}{\end{eqnarray}}
\newcommand{\beas}{\begin{eqnarray*}}
\newcommand{\eeas}{\end{eqnarray*}}
\newcommand{\nn}{\nonumber}
\newcommand{\bex}{\begin{example} \rm }
\newcommand{\eex}{\rule{5pt}{5pt} \end{example}}
\newcommand{\bdf}{\begin{definition} \rm }
\newcommand{\edf}{\end{definition}}
\newcommand{\bthm}{\begin{theorem} \rm \sf }
\newcommand{\ethm}{\end{theorem}}
\newcommand{\bproof}{\begin{proof} \rm \sf }
\newcommand{\eproof}{\end{proof}}
\newcommand{\balg}{\begin{algorithm} \rm }
\newcommand{\ealg}{\end{algorithm}}
\newcommand{\bth}{\begin{theorem}}
\begin{document}

\thispagestyle{empty}


\raggedright

\newpage

\today


{\bf PERCENTILE-BASED RESIDUALS FOR MODEL ASSESSMENT}\\

Sophie B\'erub\'e\footnote{Contact author: sberube3@jhmi.edu}, Abhirup Datta, Qingfeng Li, Chenguang Wang, Thomas A. Louis\\
Johns Hopkins Bloomberg School of Public Health, 615 N. Wolfe St., Baltimore MD 21205

\section*{Abstract}
Residuals are a key component of diagnosing model fit. The usual  practice is to compute standardized residuals  using expected values and standard deviations of the observed data, then  use these values to detect outliers and assess model fit.  Approximate normality of these  residuals is    key  for this process to have good properties, but in many modeling contexts, especially for complex, multi-level models, normality may not hold. In these cases outlier detection and   model diagnostics  aren't  properly calibrated. Alternatively, as  we demonstrate, residuals computed from the percentile location of a datum's value in its full predictive distribution lead to well calibrated evaluations of model fit. We generalize an approach described by~\citet{DunnandSmyth:etal:1996} and evaluate properties mathematically, via case-studies and by simulation. In addition, we show that the standard residuals can be calibrated to mimic the percentile approach, but that this extra step is avoided by directly using percentile-based residuals.  For both the percentile-based residuals and the  calibrated standard residuals, the use of full predictive distributions with the appropriate location, spread and shape is necessary for valid assessments.

{\bf KEYWORDS}\\ Percentile-based residuals, Model assessment, Outlier detection, non-Gaussian predictions, Well-calibrated diagnostics.

\baselineskip 16.5pt

\section{Introduction}
Residuals are a  key component of diagnosing  model fit.  They are used to identify outlying data points, and plotted against predicted values they can reveal model  lack of fit.  The commonly used standard residuals are computed as, {\tt Residual = (Observed - Expected)/SD}, where the `Expected' and `SD'(standard deviation) come from a data point-specific  predictive distribution.    Irrespective of model form, if the predictive distributions are well estimated, these residuals have mean 0 and variance 1, however there is no guarantee that they will have a $N(0,1)$ distribution. If they aren't  $N(0,1)$, outlier detection and global model assessments can perform poorly, with poorly calibrated Type~I error for outlier detection, potentially low power, and misleading residual plots. 
Therefore, a  residual that is well-calibrated for any predictive distribution  has the potential to improve performance, and  a percentile-based approach achieves this goal.

Herein, we develop and evaluate a generalization of~\citet{DunnandSmyth:etal:1996}. Expanding on their randomized percentile-based residual, we consider an approach that goes beyond the first two moments of the full predictive distribution and uses all data points to derive the full predictive distribution, which is then used in its entirety to compute the percentile-based residuals. We further derive mathematical properties of these percentile-based residuals including their power.

Percentile-based residuals are computed by  finding the percentile location of an observation in its full predictive distribution,  then computing the corresponding Gaussian quantile  to produce the residual.    For continuous distributions, when the full predictive distribution matches the underlying truth, these residuals are distributed $N(0, 1)$. The definition is general in that the  full predictive distribution can be from a Bayesian analysis (including using the MCMC draws as the distribution), from  frequentist modeling, from machine learning (e.g. classification and regression trees (CART), support vector machines, etc.) or from any other modeling approach.  More recent work on this topic includes~\citet{cook:etal:2006} describing the `percentile, and  Gaussian quantile' approach to evaluating computer software and~\citet{efron:statsci:2008} who presents an example of transforming to z-values. The use of inverse Gaussian transformations was introduced much earlier than 2006 though, with~\citet{Efron:1987} showing an example of the ability of normalizing inverse transformations to automatically improve performance of the bootstrap confidence intervals without the user having to re-calibrate the process for each new application.  

With `SD', the standard deviation of the predictive distribution, if the working predictive distribution is Gaussian, the standard   (Observed - Expected)/SD residuals are identical to the percentile-based residuals. However, the Gaussian assumption is commonly inappropriate. For instance,~\citet{DunnandSmyth:etal:1996} consider a log-linear model in the context of survival data as well as a logistic regression in the context of bionomial data. Rather than these examples, we  consider models for which the full predictive distribution may incorporate  parameter uncertainty, and hierarchical models  with distributions that aren't Gaussian. We show that in such cases, assuming normality, and using the standard residuals can lead to poorly calibrated  model diagnostics and outlier detection; in the context of formal hypothesis testing,  conservative   or inflated Type~I errors and similar influences on the  power of the test.  Against this background, we  show that replacing the standard residuals by the percentile-based properly calibrates model assessment and testing. 
Similar advantages are associated with using percentile-based residuals in diagnostic plots.
Finally  we show that   the usual residuals can be calibrated to mimic the percentile-based approach, but this step is avoided by direct use of percentiles.

\section{Notation and Methods} \label{methods.sec}

Let $(Y_k, \bX_k)$ represent all  data (dependent variable, covariates) for the $k^{th}$ ($k = 1, \ldots, K$) sampling unit,  and 
let $(\bY, \bX)$ represent all data.  We focus on a scalar $Y_k$, which can be a unit-specific summary statistic.   The analyst produces a working model,
$[Y_k \mid \bX_k, \bpsi]_{\mbox{\tt wkng}}$
with covariates $\bX_k$ and parameters $\bpsi$ (all parameters; slopes, variances, variance components, etc.). Embedded in the working model are all modeling assumptions and data analytic choices. Examples include linear and logistic regression,  CART, random forests and other machine learning approaches (for these  $\bpsi$ represents the underlying algorithm's end result).
Data analysis produces the working predictive cumulative distribution function for unit $k$,
\bea \label{workingPredictive.eqn}
D_k(Y_k) &=& D_k(Y_k \mid \bX_k, \mbox{Analysis}),
\eea
but the  true  predictive cumulative distribution function for the $k^{th}$ unit is,
\bea \label{truePredictive.eqn}
 F_k(Y_k) &=&  F(Y_k \mid \bX_k).
 \eea
 $D_k$  is determined by  the the modeling approach, and producing it is  quite general.  The full posterior distribution of $\bpsi$  can be used to generate an in or out of sample predictive distribution for $Y_k$ (e.g., the collection of MCMC samples), a plug-in approach substituting $\hat \bpsi$ for $\bpsi$ with no attention to uncertainty in the estimate, or the end-result of a  machine-learning algorithm, with or without infusion of uncertainty.  

\subsection{The (O - E)/SD, or standard residuals} \label{Rstar.sec}
With $Y_k=y_k$, the standard (Observed - Expected)/SD   residuals are,
\bea \label{Rstar.eqn}
R^*_k &=& \frac{y_k-  \mu_k}{ \sigma_k} \\
 \mu_k &=& E_{D_k}(Y_k \mid \bX_k) \nn \\
 \sigma_k^2 &=& V_{D_k}(Y_k \mid \bX_k)  \nn 
\eea
An example of the above residual is linear regression, where $R^*_k = \frac{y_k - \hat{y_k}}{\sqrt{\mbox{MSE}}}$. In order to perform model diagnostics like outlier identification and goodness of fit, the empirical distribution of the $R_k^*$ is evaluated relative to the $N(0,1)$ distribution; plotting  $R_k^*$ versus $ \mu_k$ can identify the need for model enhancement. 
If $( \mu_k,  \sigma_k)$ are the true mean and standard deviation under $F_k$, then the $R_k^*$ have mean $0$ and variance $1$, but  the full distribution can  be far from Gaussian unless the $Y$s are Gaussian or approximately Gaussian due to the Central Limit Theorem (CLT).

\subsection{Percentile-based residuals} \label{Rddag.sec} 
The  standard residuals can be represented as,
	\begin{equation}\label{eq:std2}
	R^*_k = \Phi^{-1}\left \{ \Phi_{\mu_k,\sigma_k}(y_k) \right \},
	\end{equation} 
	where $\Phi_{\mu,\sigma}$ denotes the CDF, or cumulative density functon, of a $N(\mu,\sigma^2)$ distribution and $\Phi=\Phi_{0,1}$. 
	This framework supports generalizing the definition of a residual by using the full predictive distribution of $Y_k$. To relax dependence  on the CLT, we find  the percentile location of $Y_k =y_k$ in the working predictive distribution $D_k$ and map it to the  associated quantile of a $N(0,1)$ distribution.  Specifically, we define
\bea \label{Rddag.eqn}
R_k^\ddagger &=& \Phi^{-1} \left \{D_k(y_k) \right \} ~\mbox{ (for continuous $D_k$)}, \\
&=& \Phi^{-1} \left \{D_k(y_k) - 0.5  \pr_{D_k}(Y_k = y_k) \right \} ~\mbox{ (for discrete $D_k$)}.\nonumber
\eea

The `one-half correction' is needed to balance the computation for a discrete distribution.  For example, if $D_k$ puts all mass at a single point and $y_k$ is that point, the uncorrected $R_k^\ddag =  \infty$; the corrected (and correct) $R_k^\ddag =0$.  If the direct estimate, $y_k$, is equal to the largest value of the predictive distribution, the  correction brings $R^\ddag$  from infinity to a finite value.  Even for a continuous $D_k$, either $R^*$ or $R^\ddag$ can be $\pm \infty$,  for example if the observed value is beyond the support of the predictive distribution.  In these cases,  truncating the residual, for example at $\pm5.0$,  is often appropriate.

Equation~\ref{Rddag.eqn} is equivalent to   replacing $\Phi_{\mu_k,\sigma_k}$ in equation~\ref{eq:std2} with the  predictive distribution $D_k$. It is also evident that when $D_k$ is Gaussian, the standard residuals are identical to the percentile-based. Importantly, this approach allows the user to estimate the working predictive distribution, and thus the residuals, using all data points.  


\section{Properties} \label{properties.sec}
The  data analyst generates the working predictive distribution $D_k$ 
using data from $K$ units with $n_k$ observations in unit $k$. The $n_k$ can be small and so the unit-specific, direct estimates, $y_k$ may be far from Gaussian. Thus, assuming that the $R_k^*$  are $N(0,1)$ may induce  false positive and false negative rates far from the nominal values in assessing the relation between the observed and expected values.  More generally, when each $D_k = F_k$, the empirical distribution of $(R^*_1, \ldots, R^*_K)$ can deviate substantially from $N(0,1)$.

Dropping the subscript $k$ with the understanding that the distributions are $k-$specific, we evaluate properties under 
$H_0: F = D$ and under $H_1: F \ne D$.
A discrepancy between $F$ and $D$ detected under  $H_0$ is a Type~I error (a false positive) and failing to detect a discrepancy between $F$ and $D$ under $H_1$ is a Type~II error (a false negative). 
Discrepancies between $F$ and $D$ can be induced by various conditions including the use of incorrect parametric families, an incorrect mean model, the lack of uncertainty infusion or some combination of these. In analyzing residuals, the goal is to detect discrepancies between $F$ and $D$ while controlling  Type~I error, optimizing power and producing valid and informative residual plots. We show herein that depending on the choice of $D$ the use of standard residuals ($R^*$) can cause Type~I error rate to be inflated or conservative and the shape of the true predictive distribution, $F$, can be mis-represented. 
 
\subsection{Properties of $R^*$}
Let $\mu_0$ and $\sigma_0$ denote the mean and variance of $D$. 
We first show that the Type~I error for the standard residuals is not well calibrated.

\begin{theorem}\label{th:calib}[Type~I error] The Type~I error at level $\alpha$ for standard residuals, $R^*$ is given by:

\begin{equation}
\alpha^*(\alpha) =  \left\{ 
	\begin{array}{c}
		1- \Phi_{\mu_0,\sigma_0} \{ D^{-1}(1 -\alpha) \}\mbox{ for right sided test}\\
		\Phi_{\mu_0,\sigma_0} \{ D^{-1}(\alpha) \}\mbox{ for left sided test}  \label{calibrate}\\
			
		\mbox{unique root of: } 1 - D(\Phi^{-1}_{\mu_0,\sigma_0} (1 - x/2)) + D(\Phi^{-1}_{\mu_0,\sigma_0} (x/2)) - \alpha =0 \\
		\hspace*{-.2in}{\mbox{ for two sided test}}
	\end{array}
	\right.
\end{equation}
\end{theorem}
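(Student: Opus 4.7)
The plan is to derive each of the three formulas by translating the rejection event of the $R^*$-based test from the scale of the standardized residual back to the scale of $Y$, and then evaluating the resulting probability under $H_0: F = D$. Equation~(\ref{eq:std2}) is the central bridge, since it exhibits $R^* = \Phi^{-1}\{\Phi_{\mu_0,\sigma_0}(Y)\}$ as a strictly increasing function of $Y$, so any threshold event in $R^*$ maps directly onto a threshold event in $Y$.

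For the right-sided case, monotonicity of $\Phi^{-1}$ and $\Phi_{\mu_0,\sigma_0}$ turns $\{R^* > \Phi^{-1}(1-\alpha)\}$ into $\{Y > \Phi_{\mu_0,\sigma_0}^{-1}(1-\alpha)\}$, whose probability under $Y \sim D$ is $1 - D\{\Phi_{\mu_0,\sigma_0}^{-1}(1-\alpha)\}$. I would then invert the resulting nominal-to-actual relationship using the probability-integral identity $D(D^{-1}(p)) = p$: equating the rejection probability to the target $\alpha$ and solving for the critical level yields the stated closed form $\alpha^*(\alpha) = 1 - \Phi_{\mu_0,\sigma_0}\{D^{-1}(1-\alpha)\}$. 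The left-sided case follows by the symmetric argument applied to $\{R^* < \Phi^{-1}(\alpha)\}$. For the two-sided case, the event $\{|R^*| > \Phi^{-1}(1-x/2)\}$ decomposes into the union of the two tail events $\{Y > \Phi_{\mu_0,\sigma_0}^{-1}(1-x/2)\}$ and $\{Y < \Phi_{\mu_0,\sigma_0}^{-1}(x/2)\}$, whose combined $D$-probability is $1 - D\{\Phi_{\mu_0,\sigma_0}^{-1}(1-x/2)\} + D\{\Phi_{\mu_0,\sigma_0}^{-1}(x/2)\}$. Because this expression does not admit a closed-form inverse in $x$, the theorem characterizes $\alpha^*(\alpha)$ implicitly as the unique root of that expression minus $\alpha$. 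I would establish uniqueness by checking that the tail-sum is strictly monotone in $x$ on $(0,1)$ under continuity of $D$, which follows from differentiating and using positivity of the density.

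The main obstacle is conceptual rather than technical: one must track carefully the direction of the nominal-to-actual correspondence so that the closed-form one-sided expressions and the implicit two-sided equation are manifestly consistent reparameterizations of the same underlying mapping. Once the $R^*$-to-$Y$ translation given by~(\ref{eq:std2}) is in hand, the remainder of the argument reduces to a direct CDF evaluation and a single application of the probability-integral identity.
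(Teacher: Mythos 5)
Your proposal is correct and follows essentially the same route as the paper: write $R^*=\Phi^{-1}\{\Phi_{\mu_0,\sigma_0}(Y)\}$, use monotonicity to translate the rejection event in $R^*$ into a threshold event in $Y$, and evaluate its probability under $Y\sim D$, giving $1-D\{\Phi^{-1}_{\mu_0,\sigma_0}(1-\alpha)\}$ for the right-sided test; the left-sided and two-sided cases follow by symmetry and by summing the two tails, exactly as the paper indicates. The one place you go beyond the paper is the explicit inversion step: the paper's own proof stops at $1-D\{\Phi^{-1}_{\mu_0,\sigma_0}(1-\alpha)\}$ and labels it $\alpha^*(\alpha)$, even though the theorem's display reads $1-\Phi_{\mu_0,\sigma_0}\{D^{-1}(1-\alpha)\}$ --- these two maps are functional inverses of one another (actual level as a function of nominal level, versus the calibrated nominal level achieving actual level $\alpha$, the latter reappearing as equation~\ref{eq:calibrr}). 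Your reconciliation of the two via $D(D^{-1}(p))=p$, together with the monotonicity argument for uniqueness of the two-sided root, makes the statement and proof consistent in a way the paper leaves implicit; just be aware that, read literally as ``the Type~I error at level $\alpha$,'' the quantity derived in the paper's proof is the other member of the inverse pair.
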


{\bf Proof:} We give the proof for the right sided test. Let $U^* = \Phi \left (\frac{Y - \mu_0}{\sigma_0} \right )$, 
then


\bea
\pr(U^* \ge 1 - t) &=& 1 - \pr \left \{\Phi \left (\frac{Y - \mu_0}{\sigma_0} \right ) \le 1 - t \right \} \nn \\ 
&=& 1 - \pr \left \{Y \leq \Phi^{-1}_{\mu_0, \sigma_0} (1-t) \right \} \nn \\
&=& 1 - D \left \{\Phi_{\mu_0 , \sigma_0}^{-1}(1-t) \right \}. \nn \eea

So, the effective Type~I error for the standard residuals is $\alpha^*(\alpha)  = 1 - D \left \{ \Phi_{\mu_0 , \sigma_0} ^{-1}(1 - \alpha)\right \}$.  \hfill \qed

The proof for the left-sided test is essentially the same as for the right-sided; the proof for the two-sided test combines the right and left  tail probabilities. 

From Theorem~\ref{th:calib}, it follows  that the Type~I error for a right-sided test using $R^*$ induces  the following relationships to the nominal level, $\alpha$:  
\bea
\mbox{Inflated} \iff D\{\mu_0+\sigma_0\Phi^{-1}(1-\alpha)\} < (1-\alpha)  \mbox{ or }  \Phi^{-1}_{\mu_0,\sigma_0}(1 - \alpha) <D^{-1}(1-\alpha)\nn \\
\mbox{Exact} \iff D\{\mu_0+\sigma_0\Phi^{-1}(1-\alpha)\} = (1-\alpha) \mbox{ or }  \Phi^{-1}_{\mu_0,\sigma_0}(1 - \alpha)  = D^{-1}(1-\alpha)\nn \\
\mbox{Conservative} \iff D\{\mu_0+\sigma_0\Phi^{-1}(1-\alpha)\} > (1-\alpha) \mbox{ or } \Phi^{-1}_{\mu_0,\sigma_0}(1-\alpha) > D^{-1}(1-\alpha) \nn \\
\eea
It is now easy to identify the  conditions that lead to inflated or conservative Type~I error. If $D$ places more probability mass on the right-tail than the Gaussian distribution, the standard residuals will produce inflated Type~I error. Conversely, if $D$ places less probability mass on the right-tail than the Gaussian distribution, then the standard residuals will produce conservative Type~I error. Similar conditions can be derived for left- and two-sided tests. 

The proof of Theorem \ref{th:calib} immediately identifies the raw
power of $R^*$ under $H_1$, where `raw' indicates that the power is not adjusted for the poorly calibrated Type~1 error.
\begin{theorem}[\textbf{Power for $R^*$}]
$R^*$ has power for rejection to the right with nominal right-sided Type~I error  $\alpha$,
\bea \label{powerR*.eqn}
POW^*_F(\alpha) &=& 1- F\{\mu_0 +  \sigma_0 \Phi^{-1}(1-\alpha)\} = 1- F(\Phi_{\mu_0,\sigma_0}^{-1}(1-\alpha))
\eea
\end{theorem}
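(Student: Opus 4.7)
The plan is to obtain the power formula directly by reusing the computation carried out in the proof of Theorem~\ref{th:calib}, simply changing the probabilistic mechanism that generates $Y$. Under $H_0$ we had $Y \sim D$; under $H_1$ we have $Y \sim F$. Crucially, the standardization used to form $R^*$ still employs the working mean and SD $(\mu_0,\sigma_0)$ from $D$, because the analyst does not know $F$ and is computing residuals with respect to the working model.

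First I would write out the rejection region for the nominal right-sided test at level $\alpha$: reject when $R^* > \Phi^{-1}(1-\alpha)$, which, via $R^* = \Phi^{-1}(U^*)$ with $U^* = \Phi\{(Y-\mu_0)/\sigma_0\}$ and monotonicity of $\Phi^{-1}$, is the event $\{U^* > 1-\alpha\}$. Second, I would replay the three-line calculation from the proof of Theorem~\ref{th:calib} verbatim with $Y \sim F$ in place of $Y \sim D$:
$$\pr_F(U^* > 1-\alpha) \;=\; 1 - \pr_F\!\left\{Y \le \Phi_{\mu_0,\sigma_0}^{-1}(1-\alpha)\right\} \;=\; 1 - F\!\left\{\Phi_{\mu_0,\sigma_0}^{-1}(1-\alpha)\right\},$$
which is the right-hand expression in the theorem. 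The equivalent left-hand expression then follows from the Gaussian location-scale identity $\Phi_{\mu_0,\sigma_0}^{-1}(p) = \mu_0 + \sigma_0 \Phi^{-1}(p)$.

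There is essentially no obstacle here; the result is a direct corollary of the computation already done for Theorem~\ref{th:calib}, with the substitution $D \mapsto F$ only in the \emph{final} step where one evaluates the CDF of $Y$. The one point worth flagging is the qualifier ``raw'' in the statement: since Theorem~\ref{th:calib} already showed that the effective Type~I error $\alpha^*(\alpha)$ generally differs from $\alpha$, this $POW^*_F(\alpha)$ is \emph{not} directly comparable to the power of a correctly sized test at level $\alpha$. A size-adjusted version would substitute the $\alpha$ that solves $\alpha^*(\alpha) = \alpha_{\text{target}}$ into the same formula, but that refinement is outside the scope of the stated theorem.
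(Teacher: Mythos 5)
Your proposal is correct and matches the paper's own (implicit) argument exactly: the paper states that the power formula follows immediately from the proof of Theorem~\ref{th:calib} by evaluating the same rejection event under $F$ instead of $D$, which is precisely the substitution you carry out. Your added remark about the ``raw'' qualifier and the location-scale identity $\Phi_{\mu_0,\sigma_0}^{-1}(p)=\mu_0+\sigma_0\Phi^{-1}(p)$ is consistent with the paper's discussion and requires no correction.
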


\subsection{Properties of $R^\ddag$}
For   continuous $F$ and $D$,
\bea \label{distns.eqn}
 \pr_F (D(Y) \le u) &=& \pr \{Y \le D^{-1} (u)\} = F\{D^{-1} (u)\}\\
 &=& u 
\mbox{, if } D = F. \nn
\eea
Consequently, under $H_0$ when $D=F$,   $ R^\ddag \sim N(0,1)$, and 
Type~I error  using the right-sided rejection region $(\Phi^{-1}(1-\alpha),\infty)$ is  perfectly calibrated. 

Theorem~\ref{distnofRddag.thm} gives the full distribution of $R^\ddag$ for the continuous case.
\bthm[\textbf{Distribution of $R^\ddag$}] \label{distnofRddag.thm}
 For continuous $F$ and $D$ with densities $f$ and $d$, let $G^\ddag_F$ be the distribution of $R^\ddag$ with density $g^\ddag_F$ computed under $F$.  Then, if  $F$ is absolutely continuous wrt $D$,
\bea \label{distnofRddag.eqn}
G^\ddag_F (r) &=& F \left [ D^{-1} \left \{\Phi(r) \right \} \right ] \\
g^\ddag_F(r) &=& 
\phi(r) \frac{f \left [ D^{-1} \{\Phi(r) \}\right ]}{d\left [ D^{-1} \{\Phi(r)\} \right ]} . \nn
\eea
Consequently, if $D=F, g^\ddag_F(r) = \phi(r)$.
 \ethm

\bproof
\bea
R^\ddag &=& \Phi^{-1} \left \{ D(Y)\right \} \nn \\
\noalign{so,}
G^\ddag_F (r) &=& \pr (R^\ddag \le r)  = pr \left\{ \Phi^{-1}(D(Y)) \le r\right\} = pr \left \{ D(Y) \le \Phi(r)  \right \} \nn \\
&=&  \pr \left [ Y \le D^{-1} \left \{\Phi(r) \right \} \right ] 
=F \left [ D^{-1} \left \{\Phi(r) \right \} \right ] \nn
\eea
Taking the derivative wrt $r$ gives the density in equation~(\ref{distnofRddag.eqn}). 
\eproof 

From equation~(\ref{distnofRddag.eqn}), for $g^\ddag_F$ to be a Gaussian density, the ratio must  
be 1.0 (as it is under  $H_0$).  More generally, $g^\ddag_F(r)$ contains $\phi(r)$ as a multiplicative factor which can produce a Gaussian-like shape.

\subsection{Power comparisons} \label{powercomp.sec} 
We  begin by deriving  power (e.g., probability of  detecting an outlier)  for the right-sided test at level $\alpha$.

\begin{theorem}[\textbf{Power for $R^\ddag$}]
The right-side rejection power of $R^\ddag$ for a one-sided test of nominal (and actual) size $\alpha$ is,
\bea \label{powerRddag.eqn}
POW^\ddag_F(\alpha) &=& 1-F\{D^{-1} (1-\alpha)\}
\eea
\end{theorem}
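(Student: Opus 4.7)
The plan is to derive this as a direct corollary of Theorem~\ref{distnofRddag.thm}, which already gives the full distribution of $R^\ddag$ under $F$. Since the rejection region for a right-sided test of nominal size $\alpha$ is $\{R^\ddag > \Phi^{-1}(1-\alpha)\}$, the power is simply $1 - G^\ddag_F(\Phi^{-1}(1-\alpha))$. First I would invoke equation~(\ref{distnofRddag.eqn}) with $r = \Phi^{-1}(1-\alpha)$, giving
\[
G^\ddag_F(\Phi^{-1}(1-\alpha)) = F\bigl[D^{-1}\{\Phi(\Phi^{-1}(1-\alpha))\}\bigr] = F\{D^{-1}(1-\alpha)\},
\]
which collapses the $\Phi,\Phi^{-1}$ pair and yields the stated formula after taking the complement.

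Alternatively, and perhaps more transparently for readers who want a self-contained argument, I would work directly from the definition $R^\ddag = \Phi^{-1}\{D(Y)\}$ with $Y \sim F$. The chain of equalities uses only monotonicity of $\Phi^{-1}$ and $D^{-1}$:
\begin{eqnarray*}
POW^\ddag_F(\alpha)
 &=& \pr_F\bigl(R^\ddag > \Phi^{-1}(1-\alpha)\bigr) \\
 &=& \pr_F\bigl(\Phi^{-1}\{D(Y)\} > \Phi^{-1}(1-\alpha)\bigr) \\
 &=& \pr_F\bigl(D(Y) > 1-\alpha\bigr) \\
 &=& \pr_F\bigl(Y > D^{-1}(1-\alpha)\bigr) \\
 &=& 1 - F\{D^{-1}(1-\alpha)\}.
\end{eqnarray*}
The final step uses that $F$ is continuous, so $\pr_F(Y = D^{-1}(1-\alpha)) = 0$ and strict/non-strict inequalities coincide.

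There is essentially no main obstacle: the argument is three lines once one has Theorem~\ref{distnofRddag.thm}. The only subtlety worth flagging is the implicit assumption that $D$ is continuous and strictly increasing on the support of $F$, so that $D^{-1}$ is well-defined at $1-\alpha$ and the inversion step is valid; this is already in force via the absolute-continuity hypothesis of the preceding theorem. I would also remark, parenthetically, that because $R^\ddag$ is calibrated under $H_0$ (nominal size equals actual size, as noted just before Theorem~\ref{distnofRddag.thm}), this power expression is directly comparable to $POW^*_F(\alpha)$ in equation~(\ref{powerR*.eqn}) only after first calibrating $R^*$; otherwise the two powers are evaluated at different actual Type~I error levels.
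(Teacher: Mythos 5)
Your proposal is correct and its first derivation is exactly the paper's proof, which consists of substituting $r=\Phi^{-1}(1-\alpha)$ into equation~(\ref{distnofRddag.eqn}) and taking the complement; the alternative direct computation you give is just an unfolding of the same argument.
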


\begin{proof}
Substitute $r=\Phi^{-1}(1-\alpha)$ in equation~\ref{distnofRddag.eqn}.
\end{proof}

Using equations~\ref{powerR*.eqn} 
and~\ref{powerRddag.eqn} various comparisons can be made between the standard residuals ($R^*$) and the percentile-based residuals ($R^\ddag$).  For a right-sided test, since $F$ is monotonically increasing, $R^*$ will have higher, equal or less power than $R^\ddagger$ depending on whether  $\Phi^{-1}_{\mu_0,\sigma_0}(1-\alpha)$ is less than, equal to, or greater than  $D^{-1}(1-\alpha)$. Combining the power comparisons with the results for the  Type~I error of $R^*$ in~(\ref{th:calib}) we  have the following result:
\begin{theorem}\label{th:sizepower}
For a right-sided test, the standard residuals have inflated, correct, or conservative Type~I error, and higher, equal, or lesser power than $R^\ddagger$ depending on whether $\Phi^{-1}_{\mu_0,\sigma_0}(1-\alpha)$ is less than, equal to, or greater than $D^{-1}(1-\alpha)$.
\end{theorem}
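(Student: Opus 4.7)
The plan is to assemble Theorem~\ref{th:sizepower} directly from the two ingredients already on the page: the Type~I error expression $\alpha^*(\alpha)=1-D\{\Phi^{-1}_{\mu_0,\sigma_0}(1-\alpha)\}$ from Theorem~\ref{th:calib}, and the two raw-power expressions $POW^*_F(\alpha)=1-F\{\Phi^{-1}_{\mu_0,\sigma_0}(1-\alpha)\}$ and $POW^\ddag_F(\alpha)=1-F\{D^{-1}(1-\alpha)\}$. Both claims in the theorem reduce to inequalities between the two cutoffs $c^*:=\Phi^{-1}_{\mu_0,\sigma_0}(1-\alpha)$ and $c^\ddag:=D^{-1}(1-\alpha)$, so the whole argument is a monotonicity exercise once everything is written in these terms.

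First, for the Type~I error comparison, I would observe that the nominal level $\alpha$ equals $1-D\{D^{-1}(1-\alpha)\}=1-D(c^\ddag)$ (using that $D$ is continuous so $D\circ D^{-1}$ is the identity on the range of $D$). Then $\alpha^*(\alpha)-\alpha = D(c^\ddag)-D(c^*)$, and since $D$ is non-decreasing this difference is positive, zero, or negative precisely when $c^*$ is less than, equal to, or greater than $c^\ddag$. That yields the three size regimes in the exact form stated.

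Second, for the power comparison I would form the analogous difference $POW^*_F(\alpha)-POW^\ddag_F(\alpha)=F(c^\ddag)-F(c^*)$, and invoke monotonicity of $F$ (which is a CDF, hence non-decreasing) to conclude that this quantity has the same sign as $c^\ddag-c^*$. Combining the two bullet points, the sign of $c^\ddag-c^*$ simultaneously governs whether $R^*$ is inflated/exact/conservative and whether its raw power exceeds, equals, or falls short of that of $R^\ddag$; the three cases stated in the theorem are then immediate.

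I do not anticipate a genuine obstacle: the only subtleties are (i) making sure $D$ (and $F$) are continuous so that $D^{-1}$ behaves as a true inverse on the relevant points — this is already the standing assumption in the continuous case of Section~\ref{properties.sec} — and (ii) noting that the power comparison is ``raw,'' i.e.\ conducted at a common nominal cutoff rather than at a common actual size, which is exactly what equation~(\ref{powerR*.eqn}) delivers and what the theorem statement calls for. No additional machinery, no truncation argument, and no appeal to the discrete one-half correction are needed here.
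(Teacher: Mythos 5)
Your proposal is correct and follows essentially the same route as the paper: the paper likewise obtains the size trichotomy by comparing $\alpha^*(\alpha)=1-D\{\Phi^{-1}_{\mu_0,\sigma_0}(1-\alpha)\}$ with $\alpha=1-D\{D^{-1}(1-\alpha)\}$ via monotonicity of $D$, and the power trichotomy by comparing the two raw-power expressions via monotonicity of $F$. The only (shared, minor) implicit assumption is strict monotonicity of $D$ and $F$ near the cutoffs so that the three cases are genuinely equivalent to the sign of $c^\ddag-c^*$.
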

The result  shows why it is inappropriate to use the standard residuals. When $\Phi^{-1}_{\mu_0,\sigma_0}(1-\alpha) < D^{-1}(1-\alpha)$, $R^*$ may have higher power than $R^\ddagger$, but that apparent win is induced at least in part by  inflated Type~I error. 
On the other hand, when $\Phi^{-1}_{\mu_0,\sigma_0}(1-\alpha) > D^{-1}(1-\alpha)$, $R^*$ will have conservative Type~I error and lower power than $R^\ddag$.
By contrast, the $R^\ddag$ have properly calibrated Tyep~I error and valid power. Importantly, even with well calibrated percentile-based residuals, multiple testing corrections should be performed when appropriate.

\subsubsection{Calibrating $R^*$} 
The $R^*$ can be adjusted to have properly calibrated Type~I error and thus valid power.
 From Theorem~\ref{th:calib}, using a right-sided rejection region of the form $(\Phi^{-1}(1-\alpha^*),\infty)$, $R^*$ gives a Type~I error of $1 - D\{\Phi_{\mu_0 , \sigma_0}^{-1}(1-\alpha^*)\}$. Equating this to the nominal level $\alpha$ we have:
\begin{equation}\label{eq:calibrr}
\alpha^*=1-\Phi_{\mu_0,\sigma_0}\{D^{-1}(1-\alpha)\}.
\end{equation}
Constructing the right-sided rejection region based on the the $(1-\alpha^*)^{th}$ quantile now gives a test based on usual residuals with  calibrated Type~I.
As Theorem~\ref{th:calibpower} shows,  the power for the  calibrated $R^*$ equals the power for the percentile-based residuals. 
\begin{theorem}\label{th:calibpower} For the right-sided test, the power of the calibrated $R^*$ equals that of  $R^\ddag$.
\end{theorem}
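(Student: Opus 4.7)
The plan is to verify the claim by direct substitution, using the two power formulas already proven earlier in the section together with the calibration equation~(\ref{eq:calibrr}). Concretely, I would start by writing down the power of the raw standard residuals at nominal level $\alpha^*$, namely $POW^*_F(\alpha^*) = 1 - F\{\Phi^{-1}_{\mu_0,\sigma_0}(1-\alpha^*)\}$ from equation~(\ref{powerR*.eqn}), and the power of the percentile-based residuals at level $\alpha$, namely $POW^\ddag_F(\alpha) = 1 - F\{D^{-1}(1-\alpha)\}$ from equation~(\ref{powerRddag.eqn}).

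Next, I would plug in the calibrated level $\alpha^* = 1 - \Phi_{\mu_0,\sigma_0}\{D^{-1}(1-\alpha)\}$ given by~(\ref{eq:calibrr}). Rearranging this equation gives $1-\alpha^* = \Phi_{\mu_0,\sigma_0}\{D^{-1}(1-\alpha)\}$, and applying $\Phi^{-1}_{\mu_0,\sigma_0}$ to both sides (which is legitimate because $\Phi_{\mu_0,\sigma_0}$ is a strictly increasing continuous CDF, hence invertible on its range) yields $\Phi^{-1}_{\mu_0,\sigma_0}(1-\alpha^*) = D^{-1}(1-\alpha)$. Substituting this identity into $POW^*_F(\alpha^*)$ collapses it to $1 - F\{D^{-1}(1-\alpha)\}$, which is exactly $POW^\ddag_F(\alpha)$.

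There is no real obstacle here: the result is essentially a book-keeping corollary of Theorems~\ref{th:calib} and the two preceding power theorems, and the calibration equation~(\ref{eq:calibrr}) was designed precisely so that the rejection cutoff in the $Y$-scale agrees with $D^{-1}(1-\alpha)$. The only mild subtlety worth flagging is the invertibility of $\Phi_{\mu_0,\sigma_0}$ and $D$ at the relevant quantiles; for continuous $D$ with positive density in a neighborhood of $D^{-1}(1-\alpha)$ this is immediate, and the Gaussian CDF is globally invertible, so the substitution is unambiguous. I would close the proof after the one-line substitution, since no additional probabilistic argument is required.
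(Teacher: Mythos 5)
Your proposal is correct and follows essentially the same route as the paper's own proof: apply the power formula from equation~(\ref{powerR*.eqn}) at the calibrated level $\alpha^*$, use equation~(\ref{eq:calibrr}) to identify $\Phi^{-1}_{\mu_0,\sigma_0}(1-\alpha^*)$ with $D^{-1}(1-\alpha)$, and recover the expression in equation~(\ref{powerRddag.eqn}). The only difference is that you spell out the invertibility step explicitly, which the paper leaves implicit.
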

\begin{proof} From equation~\ref{powerR*.eqn}, the power for the right-sided test based on $R^*$ using calibrated rejection region is given by $1- F\{\Phi_{\mu_0,\sigma_0}^{-1}(1-\alpha^*)\}$. Using the definition of $\alpha^*$ from equation~\ref{eq:calibrr}, this gives $1- F\{D^{-1}(1-\alpha)\}$, which is the power for the percentile-based residuals from equation~\ref{powerRddag.eqn}. 
\end{proof}
This power equivalence also holds for a left-sided test, but does not hold exactly for a two-sided.  

Theorems~\ref{th:sizepower} and~\ref{th:calibpower} illustrate the pitfalls of using the standard residuals and highlight the importance of using the full,  predictive distribution for model assessment, via either $R^\ddag$ or the calibrated $R^*$.

\section{Simulation study}
\label{sec:org8bbdfc2}
We conducted a simulation study to evaluate and compare properties of $R^*$ (both uncalibrated and calibrated)  and $R^\ddag$.

\subsection{Data generation}
\label{sec:org25a4a2c}
For 
\(Y_k \sim\) Beta$(a_k,b)$,  
 \(b \equiv 3\),  \( \log (a_k) = \beta_0 + \beta_1
    X_{k,1} + \beta_2 X_{k,2},\) \(X_{k,1} \sim N(0,1)\) and
\(X_{k,2} \sim \mbox{Bernoulli}(0.5)\), we consider two \emph{true} models, \(F_k^{(0)}\) and \(F_k^{(1)}\). For each model, \(\beta_0=0, \beta_1=1\). 
For \(F_k^{(0)}\),  \(\beta_2 = 0\); for \(F_k^{(1)}\), \(\beta_2 = -5\).
\(F_k^{(0)}\) and \(F_k^{(1)}\)
can be interpreted as the  cumulative distribution under the null and alternative hypotheses, respectively. 

In the  working model ($D_k$), $(b, \beta_0, \beta_1)$ are unknown,  
and $\log(a_k) = \beta_0 + \beta_1X_{k,1}.$
 Bayesian analysis produces the  working predictive distribution
\(D_k(Y_k)\) based on  the collection of the MCMC samples of \(Y_k\). Specifically,  $(\beta_0,\beta_1) ~ind~ \mbox{N}(0, 100)$  and  $b \sim \mbox{Uniform}(0,5)$. A total of $2000$ iterations were obtained with $1000$ burn-in. The convergence was checked visually by the trace plot as well as the Gelman-Rubin convergence statistic \citep{gelman1992inference}.

\subsection{Results}
\label{sec:org944f88a}
Based on a single replication of \(K=1,\!000\) sampling units, Figure~\ref{fig:orgf968955}
compares \(R^*\) and \(R^\ddagger\) from the working model \(D\) under the true
models \(F^{(0)}\) (the null hypothesis) and \(F^{(1)}\) (the
alternative hypothesis).  Note that under the null (the left column), the distribution  of \(R^\ddagger\) is very close to the \(N(0,1)\) reference, which is not the case for  \(R^*\), indicating  that $R^*$ is poorly calibrated.  
Also, under the alternative (right column)  $R^\ddag$ is more likely to detect a model discrepancy than is $R^*$. 

\begin{figure}[htbp]
\centering
\includegraphics[width=1\textwidth]{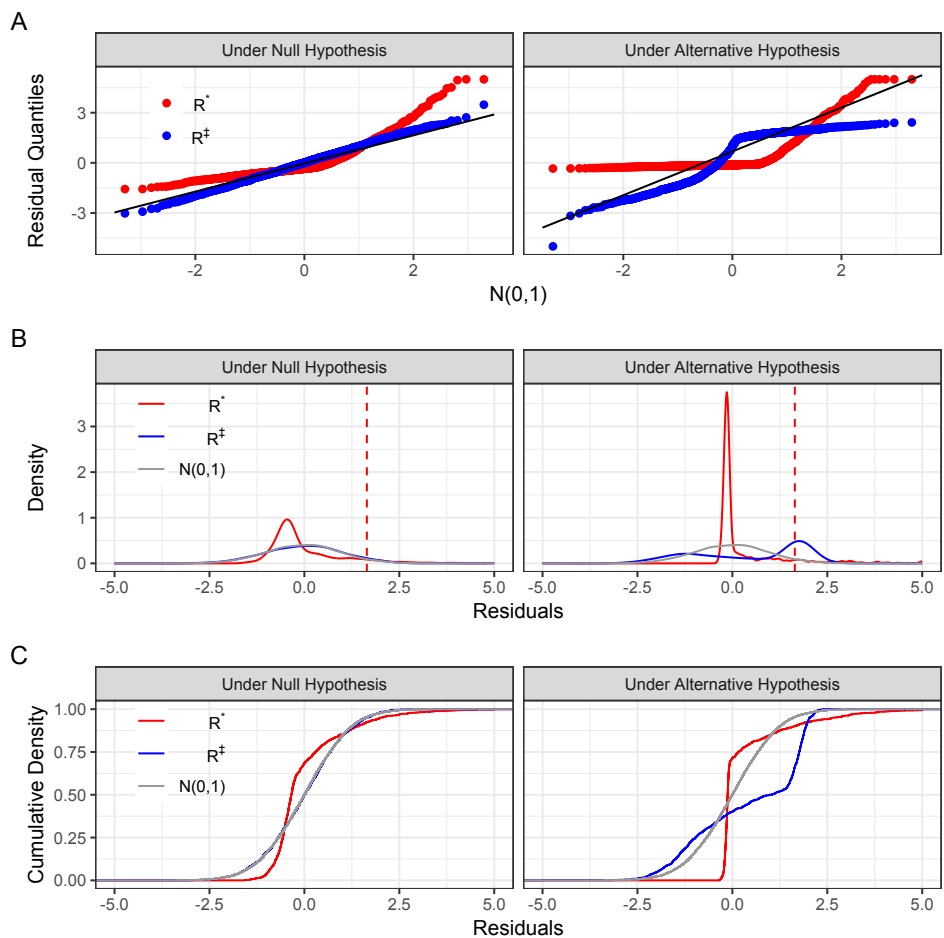}
\caption{\label{fig:orgf968955} 
\(R^\ddagger\) and \(R^*\) based on a single replication of \(1,000\) sampling units.  Panel A,  Q-Q plots; panel B, smoothed densities; Panel C, empirical CDFs. Panel A includes the reference $y=x$ line (black). Panels B
and C include   the reference \(N(0,1)\) density and  distribution.  The vertical dashed line in Panel B corresponds to the right-sided rejection region for \(\alpha = 0.05\). The reference \(N(0,1)\) line overlaps with \(R^\ddagger\) in the left column (null hypothesis) in Panels B and C.}
\end{figure}

\newpage

Table~\ref{tbl:simu4} reports  the estimated rejection rate based on $1,\!000$ replications  of the null hypothesis where the {\em true}  residual of a sampling unit is $0$, with nominal, right-sided 
$\alpha= .05$  for
\(R^\ddagger\), \(R^*\), and calibrated \(R^*\) with  \(\alpha^*\) from equation~\ref{eq:calibrr}.  Type~I
error is inflated for \(R^*\), but neither for \(R^\ddagger\) nor for calibrated \(R^*.\)
All simulations produce similar comparisons.

\begin{table}[ht]
  \centering
  \begin{tabular}{crrccccr}
    \hline\hline
    True  &           &   & \multicolumn{3}{c}{Rejection Rate} & Calibrated \\ \cline{4-6}
    Model &Hypothesis & N & $R^*$ & $R^*$ (calibrated) & $R^\ddagger$ & $\alpha^*(0.05)$ \\ 
    \hline
   $F^{(0)}$ & Null         &  150 & 0.074 & 0.050 & 0.049 & 0.026 \\ 
            &              &  175 &  0.074 & 0.050 & 0.050 & 0.026\\ 
            &              &  200 &  0.074 & 0.050 & 0.050 & 0.026\\ 
            &              &  225 &  0.074 & 0.050 & 0.050 & 0.026\\ 
            &              &  250 &  0.073 & 0.050 & 0.049 & 0.026\\  \hline 
   $F^{(1)}$ &  Alternative &  150 & 0.103 & 0.324 & 0.321 & 0.428 \\
            &              &  175 &  0.103 & 0.326 & 0.323 & 0.429\\ 
            &              &  200 &  0.102 & 0.326 & 0.324 & 0.430\\ 
            &              &  225 &  0.102 & 0.319 & 0.316 & 0.424\\ 
            &              &  250 &  0.102 & 0.323 & 0.320 & 0.426\\ 
    \hline\hline
  \end{tabular}
  \caption{Estimated rejection rate based on $1,000$ replications with right-sided $\alpha = 0.05$ for $R^\ddagger$, $R^*$, and calibrated
    $R^*$ using  $\alpha^*$ from equation~\ref{eq:calibrr}.}
  \label{tbl:simu4}
\end{table}


\section{Application to Protein Microarrays} \label{proteinMicroarray.sec} 

When a protein microarray is assembled, probes or specific proteins are arranged in rows and columns on a glass slide. After the sample of interest has been loaded onto the slide,  the array is scanned and light of different wavelengths produces a signal at each probe the intensity of which depends on the presence and quantity of a particular target protein in the sample. The scanning apparatus for protein microarrays produces two measurements at each probe: an observed foreground, signal $Y_{fg}$ and an observed background signal, $Y_{bg}$. We present simulated arrays composed of $10,\!000$ individual foreground and background signals generated under three conditions. Generally, the goal of this simulation is to compare a case where the analysis model incorporates more variability than is in the data generating model and a case where the analysis model and the data generating model match exactly. In this sense, we aim to show how percentile-based residuals can effectively assess general model fit. 

\subsection{Model and estimation}
\label{model_and_est_pm.sec}
In this simulation, we posit that  that a true underlying foreground signal, $S$, and a true underlying  background signal, $B$, multiply to produce a true, underlying quantity, $S\times B$. In the protein array, $B$ and $S \times B$ are  measured with multiplicative   errors $e_{bg}$ and $e_{fg}$ to produce $Y_{bg}$ and $Y_{fg}$.  Measurement errors  $e_{fg}$ and $e_{bg}$ are   independent, log-normal with  $\sis_{bg}, \sis_{fg}$ the respective variances of the underlying normal distributions.

These assumptions are encoded in the Bayesian hierarchical model
(the  analysis model, $D$),  
\bea \label{bayes.eqn}
Y_{bg}|B&\sim&\mbox{log-normal} \left \{\log(B)- \frac{\sis_{bg}}{2}, \sis_{bg}\right \}\nn\\
Y_{fg}|S,B &\sim& \mbox{log-normal} \left \{\log(B\times S)- \frac{\sis_{fg}}{2}, \sis_{fg}\right \}\\
S &\sim& \mbox{gamma} (\alpha_{s}, \beta_{s}), \alpha_s=\frac{\mu_s^2}{\sis_s}, \beta_s=\frac{\mu_s}{\sis_s} \nn \\
B &\sim& \mbox{gamma}  (\alpha_{b}, \beta_{b}), \alpha_b=\frac{\mu_b^2}{\sis_b}, \beta_b=\frac{\mu_b}{\sis_b} \nn \\
\mu_s, \mu_b &\sim& \mbox{uniform} (0, 10^6) \nn\\
\sis_s , \sis_b &\sim& \mbox{uniform} (0,10^8) \nn
\eea

We  performed simulations by generating $10,\!000$, or an array's worth of foreground and background signals on a three-tiered basis for the true distribution ($F$). Tier~1  involves fixing a single $S=6,\!000$ and $B=60$, Tier~2  involves fixing $\alpha_{s}= \alpha_{b}= \beta_{s}=\beta_{b}=1$ (not shown below), and Tier~3 involves randomly generating $\alpha_{s}, \alpha_{b}, \beta_{s}, \beta_{b}$.   The analysis model ($D$, equation~\ref{bayes.eqn}) matches Tier~3. In all tiers,  $\sis_{bg}=\sis_{fg}=0.1$.

We report results for Tier~1 and  Tier~3   (results for Tier~2 are very similar to those for Tier~1).  The analysis model is used to generate MCMC draws from relevant posterior distributions.  For  each MCMC, three chains were run with randomly selected initial values for all parameters with a  burn-in of $2,\!000$ and a subsequent  $10,\!000$ iterations with  a thinning interval of length 
$10$, resulting in a sample size of $1,\!000$ for each of the $10,\!000$ simulated probes on an array. The convergence was checked visually by the trace plot as well as the Gelman-Rubin convergence statistic \citep{gelman1992inference}.

\subsection{Analysis of residuals}\label{residual_analysis_pm.sec}

We define the standard residual and the percentile-based residual in this context as follows:\\
\bea \label{y_fg_residuals.eqn}
R_i^*&=& \frac{Y_{fg,i} - E[Y_{fg,i}|\mbox{model}(D)]}{\sqrt{Var[Y_{fg,i}|\mbox{model}(D)]}}\nn\\
R_i^{\ddag}&=& \Phi_{0,1}^{-1}[\mbox{percentile location of} Y_{fg,i} \mbox{in Pred}_i\{\mbox{model}(D)\}]\nn\\
\eea

In this case, `$D$' is all generated data points, $\mbox{model}(D)$ is the model described in~\ref{bayes.eqn} estimated using $D$, and $\mbox{Pred}_i\{\mbox{model}(D)\}$ is the full predictive distribution of $[Y_{fg,i}|\mbox{model}(D)]$. Note that best practice would be to set aside the $ith$ observation, fit the model and use it to predict the $ith$ data value using model$(D_{-i})$. However, with the large sample size in our example, the difference is negligible. 

While in Tier~3, the analytic and data-generating models match, in 
Tier 1, the analytic model brings in  more stochastic elements than are present in the data.  This discrepancy is clearly displayed in panel~B of Figure~\ref{unif_plots}.   The percentile-based residuals of the true $Y_{fg}$ values in the posterior predictive distribution $Y_{fg,i}|\mbox{model}(D)$
have  a standard normal distribution in Tier~3 , whereas the percentile-based residuals in Tier~1 have a much narrower spread and are centered slightly to the left of zero. 

\begin{figure}[ht]
\centering
{\includegraphics[width=150mm, height=150mm]{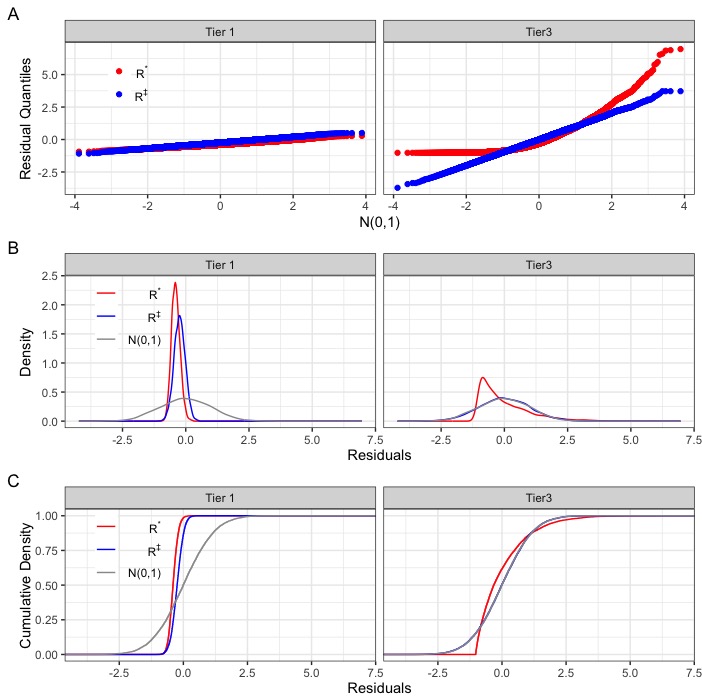}}
\caption{\label{unif_plots} Distribution of the percentile location of $Y_fg$  in its full posterior predictive distribution, Tier~1 (left column), Tier~3 (right column).  Panel A,  Q-Q plots; panel B, smoothed densities; Panel C, empirical CDFs. Panel A includes the reference $y=x$ line (black). Panels B
and C include   the reference \(N(0,1)\) density and  distribution.}
\end{figure}

Importantly, in Tier~3, where the data analysis model and the data generation model match exactly, the standard residuals, $R^*$ depart from the standard normal distribution. As evidenced clearly in Panel~A of figure~\ref{unif_plots} the quantiles of the standard residuals as compared to the $N(0,1)$ quantiles would lead to an inflated Type I error rate, that is a higher probability of erroneously detecting a discrepancy between the true predictive distribution and the working predictive distribution.


\section{Sub-national estimates of contraceptive use} \label{pmaback.sec}
Access to family planning provides multi-faceted benefits to women and their families. It has been shown to reduce maternal and child mortality, empower women and girls, and enhance environmental sustainability. Remarkable progress has been made in the past several decades in measuring family planning use rates around the world, and in low-income countries in particular. However, the national focus of those surveys (e.g., Demographic and Health Surveys) makes the result impractical for monitoring and evaluation at sub-national levels. Given the fact that many policies are made and implemented at subnational levels, there is an urgent need to reach out to policy makers at local levels and empower them with relevant estimates. 

\subsection{The Performance Monitoring and Evaluation 2020 Survey} \label{pmadata.sec} 

The data analyzed here are drawn from Performance Monitoring and Evaluation 2020 (PMA2020) survey conducted in Kenya during May-July of 2014. The survey interviewed 3479 women in 111 enumeration areas (EAs) in Kenya. Funded by the Bill and Melinda Gates Foundation, PMA2020 was originally designed to facilitate annual progress reporting in support of the goals and principles of Family Planing 2020 (FP2020) initiative across priority countries in Africa and Asia \citep{Zimmerman:etal:2017}. The survey uses mobile devices (smartphones) to routinely gather nationally representative data on key family planning indicators. Data are collected at the woman, household, and facility levels by a network of resident enumerators stationed throughout the country.  
\bigbreak

Given its original goal of providing national estimates, PMA2020 surveys are adequately powered to provide reliable estimates for the whole nation,and in some cases for urban and rural regions separately. While national estimates become more available, regional, county and district officials expressed interests in estimates of indicators to monitor and evaluate progress at their level of operation and responsibilities. To meet this need, a Bayesian hierarchical model was developed for several African countries with multiple rounds of PMA surveys
\citep{Li:etal:2018}.  Described below is a model for a single cross-section  of the repeated cross-sectional data collected  in that study. The sufficiently large number of women per EA (about 30) makes the standard and percentile-based residuals nearly identical, and to illustrate the difference between the two, we present results  based on a 30\%, unstratified random sample of the original PMA dataset.

\bigbreak

\subsection{Bayesian Modeling} \label{pmamodel.sec} 
$Y_{ik}=0/1$ is the indicator of woman $i$ in EA $k$
$(i = 1, \ldots, n_k; k = 1, \ldots K)$ not using (0) or using (1) contraceptive methods. 
 $X_{ik}$  is the woman-sepcific vector of covariates (e.g., age, education, parity), and $U_k\sim N(0, \tausq)$ is an  EA-specific random effect. The working model is, 
\begin{eqnarray*}
P_{ik} &=& \pr(Y_{ik}=1 \mid X_{ik},\beta, U_k = u_k) \\
    \logit\left (P_{ik} \right )  &=&X_{ik}\beta + u_{k}.
\end{eqnarray*}
The $P_{ik}$ are `rolled up' to the EA level (producing
$P_{+k}$), and these are the model-based, EA-specific
 rates.

As described in \cite{Li:etal:2018},  we used  Markov chain Monte Carlo implemented using JAGS (version 4.3.0) and  conducted in R, to generate  draws from the joint posterior distribution of $(\beta, \tausq; U_1, \ldots, U_K)$, and consequently for $P_{ik}$ and $P_{+k}$, the latter being `Small Area Estimates.' 
The  quantity 
`Avelogistic' is produced by mixing the $P_{ik}$  over the posterior for $(\beta, \tausq)$, but over the mixture of  $N(0, \tausq)$ priors
for $U_k$.  This approach produces residuals relative to a population model rather than those relative to a model tuned to the EA, and so evaluates model adequacy.

\subsection{Data Analysis} \label{pmaresults.sec} 
Figure~\ref{fig:pmafig2} shows the percentile-based and standardized residuals for the 111 EAs. The variation in percentile-based residuals is smaller than in standard residuals (Panel A). Panel B suggests that both  $R^*$ and $R^\ddag$ are close to normally distributed, that each set has a variance greater than 1.0, with the percentile-based having a smaller spread. Panel C reveals additional detail, showing  that  the $R^*$ are close to Gaussian (albeit with a large spread), but the $R^\ddag$ are bimodal, possibly indicating that a binary covariate is missing from the model. We haven't been able to find a covariate that removes the bimodality, but note that
$R^\ddag$ identifies the problem, the $R^*$ do not.


\begin{figure}[ht]
\centering
\includegraphics[scale=0.8]{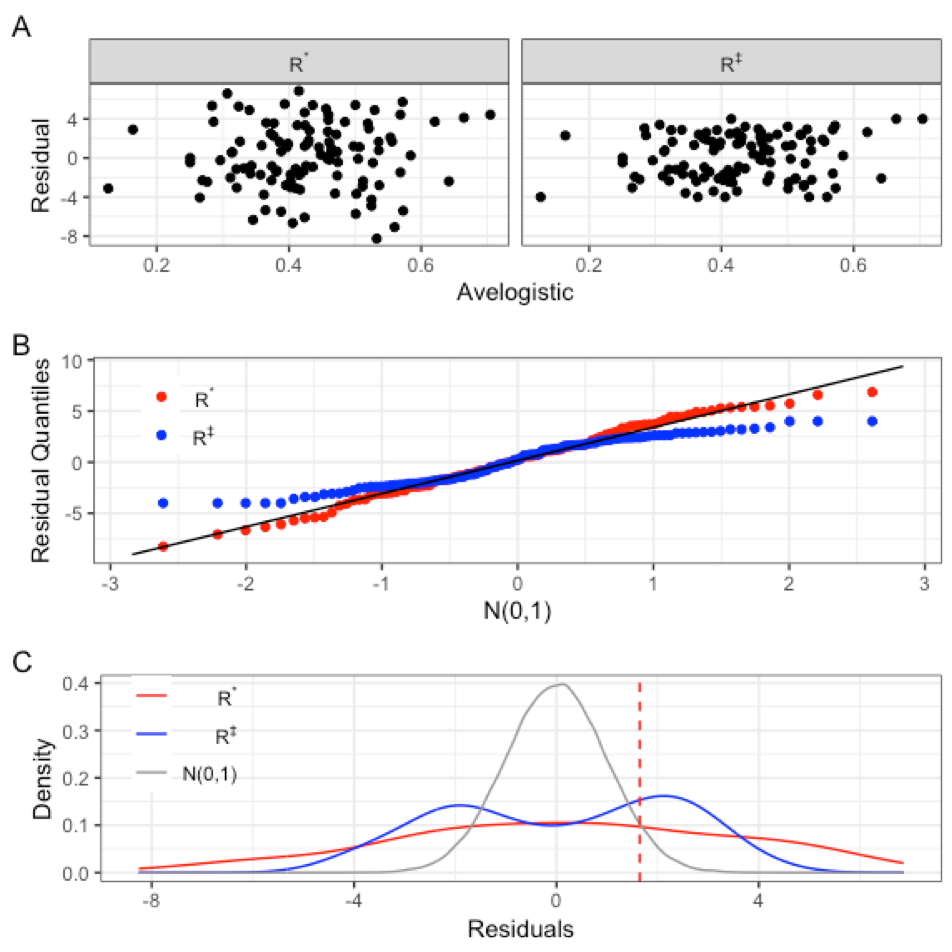}
\caption{(A) $R^*$ (left panel) and $R^\ddag$ (right panel)  vs avelogistic; (B) Q-Q Plots for \(R^\ddag\) and \(R^*\)  residuals; (C) Histograms of \(R^\ddag\) and \(R^*\) residuals, along with the reference $N(0,1)$ density.}
\label{fig:pmafig2}
\end{figure}


\clearpage
\newpage

\section{Discussion}
Residuals are a mainstay for assessing model fit and detecting  outliers.  We define and evaluate a percentile-based approach that, by respecting all aspects of a predictive distribution, is a considerable improvement over  use of the standard residuals.
Improvements include properly calibrated Type~I error when testing a working hypothesis, improved power, and more revealing diagnostic plots.   While it is the case that the nominal $\alpha$ for testing can be adjusted to calibrate the standard residuals to have a desired Type~I error, and with careful adjustments can improve residual plots and other model diagnostics, the percentile-approach automatically takes care of these issues.
The percentile-based approach has the added benefit of encouraging development of a full predictive distribution that incorporates sampling, measurement, and  modeling-induced uncertainties.  Consequently,  we encourage its use.

{\bf Funding}\\
The authors gratefully acknowledge partial support  from the following sources: SB, NIH-NIAID, U19-AI089680; QL, Performance Monitoring and Evaluation 2020 project from the Bill \& Melinda Gates Foundation; CW, NCI P30CA006973, NCI 1 P50 CA098252; TAL, NIH-NIAID, U19-AI089680

{\bf Acknowledgments}\\
The authors are grateful to the Southern and Central Africa International Centers for Excellence in Malaria Research for useful discussions particularly in relation to the section on applications to protein microarrays.

{\raggedright
\section*{Bibliography} \label{biblio} 
\renewcommand\refname{} 
\bibliographystyle{jasa}
\bibliography{percentiles}    
}

\end{document}